\newtheorem{theorem}{Theorem}
\newtheorem{lemma}{Lemma}
\newtheorem{corollary}{Corollary}
\begin{document}

\title{\bf The RNA Newton Polytope and Learnability of Energy Parameters}
\author{Elmirasadat Forouzmand \\
Department of Computer Science \\
Wayne State University\\
Detroit, MI 48202 \\
\url{elmira@wayne.edu}
\and Hamidreza Chitsaz\thanks{to whom correspondence should be addressed}\\
Department of Computer Science \\
Wayne State University\\
Detroit, MI 48202 \\
\url{chitsaz@wayne.edu}}

\maketitle

\begin{abstract}
\noindent{\bf Motivation:} Computational RNA structure prediction is a mature important problem which has received a new wave of attention with the discovery of regulatory non-coding RNAs and the advent of high-throughput transcriptome sequencing. Despite nearly two scores of research on RNA secondary structure and RNA-RNA interaction prediction, the accuracy of the state-of-the-art algorithms are still far from satisfactory. So far, researchers have proposed increasingly complex energy models and improved parameter estimation methods, experimental and/or computational, in anticipation of endowing their methods with enough power to solve the problem. The output has disappointingly been only modest improvements, not matching the expectations. Even recent massively featured machine learning approaches were not able to break the barrier. 

\noindent{\bf Approach:} The first step towards high accuracy structure prediction is to pick an energy model that is inherently capable of predicting each and every one of known structures to date. In this paper, we introduce the notion of \emph{learnability} of the parameters of an energy model as a measure of such an inherent capability. We say that the parameters of an energy model are \emph{learnable} iff there exists at least one set of such parameters that renders \emph{every} known RNA structure to date the minimum free energy structure. We derive a necessary condition for the learnability and give a dynamic programming algorithm to assess it. Our algorithm computes the convex hull of the feature vectors of all feasible structures in the ensemble of a given input sequence. Interestingly, that convex hull coincides with the \emph{Newton polytope} of the partition function as a polynomial in energy parameters. To the best of our knowledge, this is the first approach towards computing the RNA Newton polytope and a systematic assessment of the inherent capabilities of an energy model. 

\noindent{\bf Results:} We demonstrated the application of our theory to a simple energy model consisting of a weighted count of A-U and C-G base pairs. Our results show that this simple energy model satisfies the necessary condition for less than one third of the input unpseudoknotted sequence-structure pairs chosen from the RNA STRAND v2.0 database. For another one third, the necessary condition is barely violated, which suggests that augmenting this simple energy model with more features such as the Turner loops may solve the problem. The necessary condition is severely violated for 8\%, which provides a small set of hard cases that require further investigation.   
\end{abstract}

\section{Introduction}
Computational RNA structure and RNA-RNA interaction prediction have always been important problems, particularly now that RNA has been shown to have key regulatory roles in the cell \cite{Storz02,Bartel04,Hannon02,ZamHal05,WagFla02,Bra02,Gottesman05}. Furthermore, with the advent of synthetic biology at the whole organism level \cite{Gibson10}, high-throughput accurate RNA engineering algorithms are required for both \emph{in vivo} and \emph{in vitro} applications \cite{See05,SeeLuk05,SimDit05,Ven07,Reif08}. Since the dawn of RNA secondary structure prediction nearly two scores ago \cite{Borer74}, the research community has proposed increasingly complex models and algorithms, hoping that refined features together with better methods to estimate their parameters would solve the problem. Early approaches considered mere base pair counting, followed by the Turner thermodynamics model which was a significant leap forward. Recently, massively feature-rich models empowered by parameter estimation algorithms have been proposed, but they provide only modest improvements.

Despite significant progress in the last three decades, made possible by the work of Turner and others \cite{MatTur99} on measuring RNA thermodynamic energy parameters and the work of several groups on novel algorithms \cite{Nus78,WatSmi78,Zuker81,RivEdd99,DirPie03,Mcc90,Chitsaz09,Chitsaz09b,BerTaf06} and machine learning approaches \cite{Do06,Andronescu10,Zakov11}, the RNA structure prediction accuracy has not reached a satisfactory level yet. Why is it so? Up to now, human intuition and computational convenience have lead the way. We believe that human intuition has to be equipped with systematic methods to assess the suitability of a given energy model. Surprisingly, there is not a single method to assess whether the parameters of an energy model are \emph{learnable}. We say that the parameters of an energy model are \emph{learnable} iff there exists at least one set of such parameters that renders \emph{every} known RNA structure to date, determined through X-ray or NMR, the minimum free energy structure. Equivalently, we say that the parameters of an energy model are learnable iff 100\% structure prediction accuracy can be achieved when the training and test sets are identical. The first step towards high accuracy structure prediction is to make sure that the energy model is inherently capable, i.e. its parameters are learnable. In this work, we provide a necessary condition for the learnability and an algorithm to verify it. To the best of our knowledge, this is the first approach towards a systematic assessment of the suitability of an energy model. Note that a successful RNA folding algorithm needs to have the generalization power to predict unseen structures as well. We do not deal with the generalization power in this work and leave it for future work. 

\section{Background}
\subsection{RNA Secondary Structure Models}
An RNA secondary structure model is often a context free grammar together with a scoring function for either the rules, in the case of stochastic context free grammars (SCFG) \cite{Eddy94}, or the alphabet, in the case of thermodynamics models \cite{MatTur99}. Such scoring functions induce scoring on the entire generated language. The word with optimal score then yields a predicted structure for the given sequence. For the sake of brevity, we focus on thermodynamics models in this paper, but it is obvious that our methods apply to other models including SCFG as well. In our context, the scoring function is the thermodynamics free energy. A secondary structure $y$ of a nucleic acid is decomposed into loops; a free energy is associated with every loop in $y$; and the total free energy $G$ for $y$ is the sum of loop free energies \cite{MatTur99}. The same loop decomposition principle applies to interacting nucleic acids  such that the total free energy $G$ is still the sum of the free energies of loops and interaction
components \cite{Chitsaz09}. 

\subsection{Estimation of Energy Parameters}
Existing machine learning algorithms for parameter estimation in RNA  structure prediction can be grouped into two categories: 
\begin{itemize}
\item Likelihood-based methods, where the maximum likelihood (ML) principle is used to estimate the parameters of a probabilistic model, e.g. \cite{Do06}, and
\item Large-margin methods, where the model parameters are estimated to maximize the margin between the score of the true structure and the second best structure. This has been done using an online passive-aggressive training algorithm \cite{Zakov11} and Iterative Constraint Generation (CG) \cite{Andronescu07}.
\end{itemize}
The likelihood-based techniques estimate the best \emph{Gibbs} distribution, which not only assists in predicting the best secondary structure but also is utilized in determining the thermodynamic parameters. The most successful method for learning the thermodynamics of RNA has been the maximum likelihood method, as in CONTRAfold \cite{Do06}, which maximizes the probability of RNA structures $y$ given RNA sequences $x$ for the training set $D$. That is, the conditional log likelihood of the training data (using the Boltzmann distribution) is maximized to estimate the best model parameters $\mathbf{h}^* \in \mathbb{R}^k$: 
\begin{align}
\mathbf{h}^* := \arg\max_{\mathbf{h}} L(D; \mathbf{h}) &= \max_{\mathbf{h}} \sum_{(x, y)\in D} \log p(y | x, \mathbf{h}), \label{equ:ml}\\
p(y| x, \mathbf{h}) &:= \frac{e^{-G(x, y, \mathbf{h})/RT}}{Q(x, \mathbf{h})}, \label{equ:prob}
\end{align}
where $k$ denotes the number of different motifs defined in the energy model, $R$ is the gas constant, $T$ is the absolute temperature, $G(x, y, \mathbf{h})$ is the free energy, and 
\begin{equation}\label{equ:partition}
Q(x, \mathbf{h}) := \sum_{s \in \mathcal{E}(x)} e^{-G(x, s, \mathbf{h})/RT}
\end{equation}
is the \emph{partition function} \cite{Chitsaz09,DirPie03,Mcc90} with $\mathcal{E}(x)$ being the ensemble of possible structures of $x$. The free energy 
\begin{equation}\label{equ:energy}
G(x, s, \mathbf{h}) := \left\langle c(x, s), \mathbf{h} \right\rangle 
\end{equation}
is a linear function of the parameters $\mathbf{h}$ where $c(x, s) \in \mathbb{Z}^k$ is the features vector. 

\section{Learnability}
The question that we ask before parameter estimation is: does there ever exist parameters $\mathbf{h}^\dagger$ such that for every $(x, y) \in D$, $y = \arg\min_s G(x, s, \mathbf{h}^\dagger)$? If the answer to this question is no, then there is no hope that one can  ever achieve 100\% accuracy using the given model. The answer reveals inherent limitations of the model, which can be used to design improved models. We provide a necessary condition for the existence of $\mathbf{h}^\dagger$ and a dynamic programming algorithm to verify it through computing the Newton polytope for every $x$ in $D$. We will define the RNA Newton polytope below. Not only our algorithm provides a binary answer, it also quantifies the distance from the boundary.

\section{Methods}\label{sec:methods}

\subsection{Necessary Condition for Learnability}
Let $(x, y) \in D$ and $\mathbf{h}^\dagger \in \mathbb{R}^k$. Assume $y$ minimizes $G(x, s, \mathbf{h}^\dagger)$ as a function of $s$. In that case 
\begin{eqnarray}
G(x, y, \mathbf{h}^\dagger) \leq G(x, s, \mathbf{h}^\dagger),\; \ \forall\ s \in \mathcal{E}(x).
\end{eqnarray}
Replacing (\ref{equ:energy}) above,
\begin{eqnarray}
& \left\langle c(x, y), \mathbf{h}^\dagger \right\rangle \leq \left\langle c(x, s), \mathbf{h}^\dagger \right\rangle,\; \ \forall\ s \in \mathcal{E}(x) \\
& 0 \leq \left\langle c(x, s)-c(x, y), \mathbf{h}^\dagger \right\rangle,\; \ \forall\ s \in \mathcal{E}(x).\label{equ:optimality}
\end{eqnarray}
Define the \emph{feature ensemble} of sequence $x$ by
\begin{equation}
\mathcal{F}(x) := \left\{ c(x, s)\ |\ s \in \mathcal{E}(x) \right\} \subset \mathbb{Z}^k.
\end{equation}
In that case, (\ref{equ:optimality}) implies that
\begin{equation}\label{equ:hyperplane}
0 \leq \left\langle \mathcal{F}(x) - c(x, y), \mathbf{h}^\dagger \right\rangle.
\end{equation}
We call the convex hull of $\mathcal{F}(x)$ the \emph{Newton polytope} of $x$,
\begin{equation}
\mathcal{N}(x) := \mbox{conv}\left\{\mathcal{F}(x)\right\} \subset \mathbb{R}^k.
\end{equation}
We remind the reader that the convex hull of a set, denoted by `conv' hereby, is the minimal convex set that fully contains the set. 
The reason for naming this polytope the Newton polytope will be made clear below. 
Inequality (\ref{equ:hyperplane}) implies that $c(x, y) \in \partial \mathcal{N}(x)$ is on the boundary of the convex hull of the feature ensemble of $x$ with a support hyperplane normal to $\mathbf{h}^\dagger$. Therefore, we have the following theorem.

\begin{theorem}
Let $(x, y) \in D$ and $0 \not= \mathbf{h}^\dagger \in \mathbb{R}^k$. Assume $y$ minimizes $G(x, s, \mathbf{h}^\dagger)$ as a function of $s$. In that case, 
$c(x, y) \in \partial \mathcal{N}(x)$, i.e. the feature vector of $(x, y)$ is on the boundary of the Newton polytope of $x$. 
\end{theorem}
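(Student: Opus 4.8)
The plan is to read inequality~(\ref{equ:optimality}) as the statement that $c(x,y)$ minimizes the linear functional $p \mapsto \left\langle p, \mathbf{h}^\dagger\right\rangle$ over the point set $\mathcal{F}(x)$, to promote this to a minimization over the whole convex hull $\mathcal{N}(x)$, and then to invoke the elementary fact that a non-constant linear functional on a convex set cannot attain its minimum at an interior point. In other words, the theorem is essentially a dictionary entry translating ``$y$ is MFE under $\mathbf{h}^\dagger$'' into the language of convex geometry, which is precisely the Newton-polytope viewpoint the paper is building toward.

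First I would record that $y \in \mathcal{E}(x)$, since $(x,y)\in D$ is a genuine sequence--structure pair, so $c(x,y) \in \mathcal{F}(x) \subseteq \mathcal{N}(x)$; the theorem is thus a statement about \emph{where} in $\mathcal{N}(x)$ the point $c(x,y)$ lies. Next, (\ref{equ:optimality}) gives $\left\langle c(x,y), \mathbf{h}^\dagger\right\rangle \le \left\langle p, \mathbf{h}^\dagger\right\rangle$ for every $p \in \mathcal{F}(x)$. Any $q \in \mathcal{N}(x)$ is a convex combination $q = \sum_i \lambda_i p_i$ with $p_i \in \mathcal{F}(x)$, $\lambda_i \ge 0$, $\sum_i \lambda_i = 1$; by linearity of $p \mapsto \left\langle p, \mathbf{h}^\dagger\right\rangle$ we get $\left\langle q, \mathbf{h}^\dagger\right\rangle = \sum_i \lambda_i \left\langle p_i, \mathbf{h}^\dagger\right\rangle \ge \left(\sum_i \lambda_i\right)\left\langle c(x,y), \mathbf{h}^\dagger\right\rangle = \left\langle c(x,y), \mathbf{h}^\dagger\right\rangle$. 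Hence $c(x,y)$ minimizes $\left\langle \cdot, \mathbf{h}^\dagger\right\rangle$ over $\mathcal{N}(x)$, so the affine hyperplane $H = \{\, p : \left\langle p, \mathbf{h}^\dagger\right\rangle = \left\langle c(x,y), \mathbf{h}^\dagger\right\rangle \,\}$ supports $\mathcal{N}(x)$ at $c(x,y)$, and $H$ is a genuine hyperplane because $\mathbf{h}^\dagger \neq 0$.

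To conclude I would show $c(x,y) \notin \mathrm{int}\,\mathcal{N}(x)$: if some ball $B(c(x,y),\varepsilon)$ were contained in $\mathcal{N}(x)$, then the point $q = c(x,y) - \tfrac{\varepsilon}{2\|\mathbf{h}^\dagger\|}\,\mathbf{h}^\dagger$ would lie in $\mathcal{N}(x)$ yet satisfy $\left\langle q, \mathbf{h}^\dagger\right\rangle = \left\langle c(x,y), \mathbf{h}^\dagger\right\rangle - \tfrac{\varepsilon}{2}\|\mathbf{h}^\dagger\| < \left\langle c(x,y), \mathbf{h}^\dagger\right\rangle$, contradicting the minimality established above. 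Therefore $c(x,y) \in \mathcal{N}(x)\setminus \mathrm{int}\,\mathcal{N}(x) = \partial\mathcal{N}(x)$, which is the claim.

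I expect the only subtlety to be the meaning of $\mathrm{int}\,\mathcal{N}(x)$ when $\mathcal{N}(x)$ fails to be full-dimensional in $\mathbb{R}^k$ (the typical case, since the feature vectors satisfy affine relations). In that degenerate case $\mathrm{int}\,\mathcal{N}(x)=\emptyset$, so $\partial\mathcal{N}(x)=\mathcal{N}(x)$ and the conclusion is immediate; and if one wants the sharper statement, the supporting-hyperplane argument already places $c(x,y)$ on the proper face $H\cap\mathcal{N}(x)$, which is the more informative version. Either way there is no real obstacle here — the work, such as it is, is entirely in recognizing (\ref{equ:optimality}) as a support condition — and the genuine difficulty of the paper lies instead in \emph{computing} $\mathcal{N}(x)$, which the subsequent dynamic program addresses.
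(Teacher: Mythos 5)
Your proposal is correct and is essentially the paper's own argument: both hinge on displacing $c(x,y)$ by a small multiple of $-\mathbf{h}^\dagger$ inside a hypothetical interior ball and using linearity over a convex combination of feature vectors to contradict the optimality inequality (\ref{equ:optimality}); you merely reorganize it by first extending minimality to all of $\mathcal{N}(x)$ rather than decomposing the perturbed point as the paper does. Your added remark on the non-full-dimensional case is a harmless (and valid) refinement the paper leaves implicit.
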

\begin{proof}
To the contrary, suppose $c(x, y)$ is in the interior of $\mathcal{N}(x)$. Therefore, there is an open ball of radius $\delta > 0$ centered at $c(x, y)$ completely contained in $\mathcal{N}(x)$, i.e.
\begin{equation}
B_\delta(c(x, y)) \subset \mathcal{N}(x).
\end{equation}
Let $$p = c(x, y) - (\delta/2)\frac{\mathbf{h}^\dagger}{\|\mathbf{h}^\dagger\|}.$$ It is clear that $p \in B_\delta(c(x, y)) \subset \mathcal{N}(x)$ since $\|p-c(x, y)\| = \delta/2 < \delta$. Therefore, $p$ can be written as a convex linear combination of the feature vectors in $\mathcal{F}(x) = \{v_1, \ldots, v_N\}$, i.e.
\begin{align}
\exists\; \ \alpha_1, \ldots \alpha_N \geq 0:\; \ \alpha_1 v_1 + \cdots + \alpha_N v_N &= p,\\
\alpha_1 + \cdots + \alpha_N &= 1. 
\end{align}
Note that
\begin{equation}\label{equ:negative}
\left\langle p - c(x, y), \mathbf{h}^\dagger\right\rangle = -(\delta/2) \|\mathbf{h}^\dagger\| < 0.
\end{equation}
Therefore, there is $1 \leq i \leq N$, such that $\left\langle v_i - c(x, y), \mathbf{h}^\dagger\right\rangle < 0$ for otherwise,
\begin{equation}
\left\langle p - c(x, y), \mathbf{h}^\dagger\right\rangle = \sum_{i=1}^N \alpha_i \left\langle v_i - c(x, y), \mathbf{h}^\dagger\right\rangle \geq 0,
\end{equation}
which would be a contradiction with (\ref{equ:negative}). It is now sufficient to note that $v_i \in \mathcal{F}(x)$ and $\left\langle v_i - c(x, y), \mathbf{h}^\dagger\right\rangle < 0$ which is a contradiction with (\ref{equ:hyperplane}). \qed
\end{proof}

\begin{corollary}[Necessary Condition for the Learnability]
For $(x, y) \in D$, a necessary but not sufficient condition for the existence of $\mathbf{h}^\dagger$ such that $y$ minimizes $G(x, s, \mathbf{h}^\dagger)$ as a function of $s$ is that $c(x, y)$ lies on the boundary of $\mathcal{N}(x)$ the Newton polytope of $x$. 
\end{corollary}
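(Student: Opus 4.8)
The plan is to separate the statement into its necessary part and its non-sufficiency caveat, since these have quite different flavors. The necessity is almost immediate from the Theorem just proved; the only work is to reconcile the hypotheses. The non-sufficiency is the conceptually meaningful half, and I expect it is cleanest to pin it down with an explicit small instance rather than an abstract argument.

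For necessity, suppose $\mathbf{h}^\dagger$ is such that $y$ minimizes $G(x,s,\mathbf{h}^\dagger)$ over $s \in \mathcal{E}(x)$. If $\mathbf{h}^\dagger \neq 0$, the Theorem applies verbatim and gives $c(x,y) \in \partial\mathcal{N}(x)$, which is exactly the asserted condition, so nothing more is needed. The one loose end is the degenerate vector $\mathbf{h}^\dagger = 0$: by~(\ref{equ:energy}) we get $G(x,s,0) = \langle c(x,s), 0 \rangle = 0$ for every $s$, so every structure is simultaneously optimal and the conclusion may fail. I would dispose of this by the same convention already used in the statement of the Theorem, namely that the zero parameter vector does not meaningfully render $y$ the minimum free energy structure (it renders every element of $\mathcal{E}(x)$ equally optimal); equivalently, one may read the corollary with respect to an $\mathbf{h}^\dagger$ under which $y$ is the \emph{unique} minimizer, which already forces $\mathbf{h}^\dagger \neq 0$ whenever $|\mathcal{E}(x)| > 1$. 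With that case set aside, necessity is a one-sentence appeal to the Theorem.

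For non-sufficiency, I would first record that for a \emph{single} pair $(x,y)$ the boundary condition is actually also sufficient: if $c(x,y) \in \partial\mathcal{N}(x)$ then, since $c(x,y) \in \mathcal{F}(x) \subseteq \mathcal{N}(x)$ (the native structure $y$ being among the feasible structures of $x$) and $\mathcal{N}(x)$ is a polytope, the supporting hyperplane theorem provides a nonzero $\mathbf{h}$ with $\langle v - c(x,y), \mathbf{h} \rangle \geq 0$ for all $v \in \mathcal{N}(x)$, hence $G(x,y,\mathbf{h}) \leq G(x,s,\mathbf{h})$ for all $s \in \mathcal{E}(x)$. Consequently the ``not sufficient'' clause has to be understood at the level of the whole set $D$: learnability of the model requires one \emph{common} $\mathbf{h}^\dagger$ that serves as a supporting normal for every pair at once. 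To make this precise I would introduce, for each $(x,y) \in D$, the polyhedral cone $C(x,y) := \{\, \mathbf{h} : \langle c(x,s) - c(x,y), \mathbf{h} \rangle \geq 0 \ \text{for all}\ s \in \mathcal{E}(x) \,\}$; the argument above shows $C(x,y) \neq \{0\}$ exactly when $c(x,y) \in \partial\mathcal{N}(x)$, while the model is learnable exactly when $\bigcap_{(x,y) \in D} C(x,y) \neq \{0\}$. Since a finite intersection of cones that are each different from $\{0\}$ can nevertheless collapse to $\{0\}$, the per-pair boundary conditions do not imply learnability; I would close the gap with a minimal explicit counterexample, e.g.\ two short sequences in a $k = 2$ feature model whose cones meet only at the origin, rather than with a general construction.

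The main obstacle is not mathematical depth but presentation: the necessity is immediate, and the non-sufficiency, while conceptually transparent, is mildly awkward to phrase because the corollary as written quantifies over a single pair, whereas the sharp failure of sufficiency lives in the conjunction over all of $D$. I would therefore spend most of the write-up on a clean statement of the cone-intersection picture and on one small worked counterexample, keeping the necessity argument to a single line that cites the Theorem.
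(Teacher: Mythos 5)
Your necessity argument is exactly the paper's: the corollary is stated as an immediate consequence of the Theorem, and the paper offers no separate proof beyond that appeal, so citing the Theorem in one line is the intended route. Your handling of the edge case $\mathbf{h}^\dagger = 0$ is a genuine (and welcome) addition, since the Theorem carries the hypothesis $\mathbf{h}^\dagger \neq 0$ while the corollary silently drops it; your resolution (either exclude the degenerate vector, under which every structure of $x$ is trivially optimal, or read ``minimizes'' as unique minimization) is the right way to reconcile the two statements. Where you go beyond the paper is the ``not sufficient'' clause: the paper merely asserts it and never proves it, whereas you correctly observe that for a \emph{single} pair $(x,y)$ the condition $c(x,y) \in \partial\mathcal{N}(x)$ is in fact sufficient for the existence of a nonzero $\mathbf{h}$ with $\langle c(x,s)-c(x,y), \mathbf{h}\rangle \geq 0$ for all $s \in \mathcal{E}(x)$ (supporting hyperplane at a boundary point of a polytope containing $c(x,y)$), so non-sufficiency can only be meant either with respect to strict optimality of $y$ or, more substantively, at the level of a single $\mathbf{h}^\dagger$ common to all of $D$ --- which is the paper's actual notion of learnability. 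Your cone formulation $C(x,y) = \{\mathbf{h} : \langle c(x,s)-c(x,y), \mathbf{h}\rangle \geq 0 \ \forall s\}$ with learnability equivalent to $\bigcap_{(x,y)\in D} C(x,y) \neq \{0\}$ makes this precise and is a sharper statement than anything in the paper. The only incomplete piece is that your explicit two-sequence counterexample is promised rather than constructed; since the paper itself never substantiates the ``not sufficient'' clause, this does not put you behind the paper's own proof, but if you intend your write-up to actually establish that clause you should supply the example (two pairs whose cones are, say, opposite closed half-planes in $\mathbb{R}^2$ intersecting only in a line through the origin would not do --- you need the intersection to be exactly $\{0\}$, e.g. cones generated by $\{(1,0),(0,1)\}$ and $\{(-1,-\epsilon),(-\epsilon,-1)\}$ --- and you must also check such cones are realizable by actual sequence--structure pairs in the chosen feature model, which is the only nontrivial step).
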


\subsection{Relation to the Newton Polytope}
In addition to $D$ the set of experimentally determined structures, we often have a repository of thermodynamic measurements, e.g. melting curves, that can help better estimate the energy parameters. Such measurements often relate to the energy parameters through equations involving the partition function and its derivatives with respect to temperature \cite{Chitsaz09}. We show that with a change of variables, the partition function becomes a polynomial. Therefore, such equations become a system of polynomial equations the solving of which algebraically requires computation of the Newton polytope of each polynomial \cite{Emiris94,Emiris95}. Recall the partition function defined in (\ref{equ:partition}) and energy in (\ref{equ:energy}), and conclude
\begin{equation}\label{equ:partition2}
Q(x, \mathbf{h}) = \sum_{s \in \mathcal{E}(x)} e^{-\left\langle c(x, s), \mathbf{h}\right\rangle /RT}.
\end{equation}
Let $c(x, s) = (c_1(x, s), \ldots, c_k(x, s))$ and $\mathbf{h} = (\mathbf{h}_1, \ldots, \mathbf{h}_k)$. Define new variables
\begin{equation}
Z_i := e^{-\mathbf{h}_i/RT},\; \ 1 \leq i \leq k,
\end{equation}
and replace them in (\ref{equ:partition2}). We obtain the partition function
\begin{equation}
Q(x, Z) = \sum_{s \in \mathcal{E}(x)} Z^{c(x, s)},
\end{equation}
in the form of a polynomial in $\mathbb{R}[Z]$ where
\begin{equation}
Z^{c(x, s)} := \prod_{i=1}^k Z_i^{c_i(x, s)}
\end{equation}
is a monomial as $0 \leq c_i(x, s) \in \mathbb{Z}$. The Newton polytope of $Q$ is defined to be the convex hull of the monomials power vectors, i.e.
\begin{equation}
\mbox{Newton}\left\{Q(x, Z)\right\} := \mbox{conv}\left(\left\{ c(x, s)\ |\ s \in \mathcal{E}(x)\right\} \right) = \mathcal{N}(x).
\end{equation}
That is why we call $\mathcal{N}(x)$ the Newton polytope of $x$.

\subsection{RNA Newton Polytope Algorithm}
We give a dynamic programming algorithm to compute the Newton polytope for a given nucleic acid sequence $x$. Denote the length of $x$ by $L$ and the $i^{th}$ nucleotide in $x$ by $n_i$. Denote the subsequence of $x$ from the $i^{th}$ to the $j^{th}$ nucleotide, inclusive of ends, by $n_i\cdots n_j$. The following lemma allows us to formulate a divide-and-conquer strategy for computing the Newton polytope, which will in turn lead to our dynamic programming algorithm. 
\begin{lemma}
Let $f$ and $g$ be two polynomials in $\mathbb{R}[Z]$. The Newton polytope of the product of $f$ and $g$ is the Minkowski sum of individual Newton polytopes, and the Newton polytope of the sum of $f$ and $g$ is the convex hull of the union of individual Newton polytopes, i.e.
\begin{eqnarray}
& \mbox{Newton}\left(fg\right) = \mbox{Newton}\left(f\right) \oplus \mbox{Newton}\left(g\right), \\
& \mbox{Newton}\left(f+g\right) = \mbox{conv}\left\{\mbox{Newton}\left(f\right) \cup \mbox{Newton}\left(g\right)\right\},
\end{eqnarray}
in which $\oplus$ represents the Minkowski sum of two polytopes \cite{Emiris94}.
\end{lemma}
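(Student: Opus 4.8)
My plan is to establish both identities first at the level of \emph{supports} --- the sets $\mathrm{supp}(f)$ and $\mathrm{supp}(g)$ of exponent vectors occurring with nonzero coefficient --- and then transport them through the convex-hull operation using two elementary facts: (i) $\mathrm{conv}(A+B)=\mathrm{conv}(A)\oplus\mathrm{conv}(B)$ for finite $A,B\subset\mathbb{R}^k$, and (ii) $\mathrm{conv}\bigl(\mathrm{conv}(A)\cup\mathrm{conv}(B)\bigr)=\mathrm{conv}(A\cup B)$. Both are one-line computations with convex combinations (or can be cited from \cite{Emiris94}). Before starting I would record the standing hypothesis that the coefficients are nonnegative --- in our application all monomial coefficients of a partition function are positive integers --- since it makes the sum identity hold verbatim and streamlines the product identity; without it the sum identity can genuinely fail, e.g.\ $f=1+Z$, $g=-Z$.

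For the product, write $f=\sum_a c_a Z^a$ and $g=\sum_b d_b Z^b$ over their supports. Multiplying, $fg=\sum_e\bigl(\sum_{a+b=e}c_a d_b\bigr)Z^e$, so $\mathrm{supp}(fg)\subseteq\mathrm{supp}(f)+\mathrm{supp}(g)$, and applying $\mathrm{conv}$ together with fact (i) yields the inclusion $\mathrm{Newton}(fg)\subseteq\mathrm{Newton}(f)\oplus\mathrm{Newton}(g)$. The reverse inclusion is where the only real subtlety lies: I must show the extreme exponents of the Minkowski sum are not killed by cancellation. Here I would invoke the standard fact that the face of $P\oplus Q$ in a direction $\ell$ is the Minkowski sum of the $\ell$-faces of $P$ and $Q$; consequently every vertex $v$ of $\mathrm{Newton}(f)\oplus\mathrm{Newton}(g)$ has a \emph{unique} decomposition $v=v_f+v_g$ with $v_f,v_g$ vertices of $\mathrm{Newton}(f)$ and $\mathrm{Newton}(g)$ respectively. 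Since vertices of a Newton polytope lie in its support, the coefficient of $Z^v$ in $fg$ is precisely $c_{v_f}d_{v_g}\neq 0$, so $v\in\mathrm{supp}(fg)$; as this holds for every vertex, $\mathrm{Newton}(f)\oplus\mathrm{Newton}(g)=\mathrm{conv}(\text{its vertices})\subseteq\mathrm{Newton}(fg)$. (With positive coefficients one even has $\mathrm{supp}(fg)=\mathrm{supp}(f)+\mathrm{supp}(g)$ outright, and the subtlety disappears.)

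For the sum, $f+g=\sum_a(c_a+d_a)Z^a$ ranging over $\mathrm{supp}(f)\cup\mathrm{supp}(g)$. Under nonnegativity no coefficient $c_a+d_a$ cancels, so $\mathrm{supp}(f+g)=\mathrm{supp}(f)\cup\mathrm{supp}(g)$; taking $\mathrm{conv}$ and using fact (ii), $\mathrm{Newton}(f+g)=\mathrm{conv}(\mathrm{supp}(f)\cup\mathrm{supp}(g))=\mathrm{conv}\{\mathrm{Newton}(f)\cup\mathrm{Newton}(g)\}$, which is the claim (in general, without the sign hypothesis, only the inclusion $\subseteq$ survives).

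The step I expect to be the main obstacle is the reverse inclusion in the product identity --- certifying that the vertex monomials of $\mathrm{Newton}(f)\oplus\mathrm{Newton}(g)$ actually survive in $fg$. The unique-decomposition-of-vertices property of Minkowski sums is exactly the lever that resolves it; everything else reduces to routine bookkeeping with convex combinations.
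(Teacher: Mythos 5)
The paper offers no proof of this lemma at all---it is quoted as a known fact with a pointer to \cite{Emiris94}---so there is no internal argument to compare against; your proposal supplies a self-contained proof and it is essentially correct. Two remarks. First, your sign caveat is a genuine sharpening of the statement as printed: for arbitrary $f,g\in\mathbb{R}[Z]$ the sum identity only gives the inclusion $\mathrm{Newton}(f+g)\subseteq\mathrm{conv}\{\mathrm{Newton}(f)\cup\mathrm{Newton}(g)\}$, since cancellation can delete exponents (your $f=1+Z$, $g=-Z$ example), and equality in the paper's application is legitimate precisely because every polynomial produced by the partition-function recursions has positive coefficients---worth stating explicitly, as you do. Second, your product argument is stated slightly too weakly at its key step: to conclude that the coefficient of $Z^v$ in $fg$ equals $c_{v_f}d_{v_g}$ you need that $(v_f,v_g)$ is the \emph{only} pair in $\mathrm{supp}(f)\times\mathrm{supp}(g)$ summing to $v$, not merely that the decomposition of $v$ into \emph{vertices} is unique; otherwise other support pairs could contribute to, and in principle cancel, that coefficient. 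The stronger uniqueness does follow from the same face fact you invoke: if $\ell$ exposes $v$ as a vertex of the Minkowski sum, then any $p\in\mathrm{Newton}(f)$, $q\in\mathrm{Newton}(g)$ with $p+q=v$ must each minimize $\ell$ on their own polytope, forcing $(p,q)=(v_f,v_g)$; with that one-line patch the coefficient is exactly $c_{v_f}d_{v_g}\neq 0$ (no zero divisors in $\mathbb{R}$), the product identity holds for all nonzero $f,g$ without any positivity hypothesis, and your proof closes. Net effect: your route proves the lemma from first principles, and in doing so it exposes the implicit no-cancellation hypothesis behind the sum identity that the paper's citation-only treatment glosses over.
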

This lemma allows us to use the same divide-and-conquer strategy that was used for calculating the partition function \cite{Chitsaz09,DirPie03,Mcc90}. We can use the same recursions (grammar) as in the partition function algorithm but with the Minkowski sum $\oplus$ instead of multiplication, convex hull of union instead of summation, and the corresponding feature vector $c$ instead of $e^{-\langle c, \mathbf{h} \rangle/RT}$. Furthermore, since union is invariant with respect to repetition of points, the dynamic programming is allowed to be redundant, or equivalently the grammar is allowed to be ambiguous. Hence, any complete RNA structure or RNA-RNA interaction prediction dynamic programming algorithm can be transformed into a Newton polytope algorithm by replacing the energy with the corresponding feature vector, summation with the Minkowski sum $\oplus$, and minimization with the convex hull of union. 

{\bf As explained above, we transform any complete partition function or structure prediction dynamic programming algorithm, for single RNA, RNA-RNA interaction, or multiple interacting RNAs, into a Newton polytope algorithm. For the sake of illustration, we explicitly spell below only the case of single RNA with separate A-U and C-G base pair counting energy model. All the other cases are quite trivially obtained following the transformations above.} 

In this case, the feature vector $c(x, s)=(c_1(x, s), c_2(x, s))$ is two dimensional: $c_1(x, s)$ is the number of A-U, and $c_2(x, s)$ the number of C-G base pairs in $s$. Our dynamic programming algorithm starts by computing the Newton polytope for all unit length subsequences, followed by all length two subsequences,$\ldots$, up to the Newton polytope for the entire sequence $x$. We denote the Newton polytope of the subsequence $n_i\cdots n_j$ by $\mathcal{N}(i, j)$, i.e.
\begin{equation}
\mathcal{N}(i, j) := \mathcal{N}(n_i\cdots n_j).
\end{equation} 
The following dynamic programming will yield the result
\begin{equation}\label{equ:dp}
\mathcal{N}(i, j) = \mbox{conv}\left[\cup  
\left\{ \begin{array}{l}
    \mathcal{N}(i,\ell) \oplus \mathcal{N}(\ell+1,j),\; \ \ i \leq \ell \leq j-1\\
    \left\{(1, 0) \right\} \oplus \mathcal{N}(i+1,j-1) \quad \text{if $n_in_j$ = AU$|$UA}\\
    \left\{(0, 1) \right\} \oplus \mathcal{N}(i+1,j-1) \quad \text{if $n_in_j$ = CG$|$GC}\\
    \end{array} \right. \right],
\end{equation}
with the base case $\mathcal{N}(i, i) = \left\{ (0, 0) \right\}$.

There are two different approaches for polytope representation: (i) vertex representation, which is a set of points, and (ii) half plane representation, which is a set of linear inequalities.
The former is often called $\mathcal{V}$-representation and the latter $\mathcal{H}$-representation. Although they are equivalent, and there are algorithms to transform one into the other, computing Minkowski sum is more convenient with the $\mathcal{V}$-representation, and convex hull of union works more efficiently with the $\mathcal{H}$-representation. The choice of representation and algorithms will affect the running time. In this paper, we use the $\mathcal{V}$-representation.

\subsection{Verification of the Necessary Condition}
Upon computation of $\mathcal{N}(x)$ and $c(x, y)$, the feature vector of the experimentally determined structure, it remains to verify whether $c(x, y) \in \partial \mathcal{N}(x)$. Often, $\mathcal{N}(x)$ is represented by its vertices ($\mathcal{V}$-representation) or its confining half planes ($\mathcal{H}$-representation), two equivalent representations that can be transformed into one another. In an $\mathcal{H}$-representation, $c(x, y)$ is on the boundary of $\mathcal{N}(x)$ iff there is at least one confining plane on which $c(x, y)$ lies. This is true because $c(x, y) \in \mathcal{N}(x)$ anyways. Therefore, the necessary condition can be easily checked by checking membership of $c(x, y)$ in every confining plane. Since the vertices of $\mathcal{N}(x)$ are on the integer lattice, all calculations are rational and hence can be performed exactly.

\subsection{Dataset}
We used 1720 unpseudoknotted RNA sequence-structure pairs from RNA STRAND v2.0 database as our dataset $D$. RNA STRAND v2.0 contains known RNA secondary structures of any type and organism, particularly with and without pseudoknots. To the best of our knowledge, RNA STRAND v2.0 is the most comprehensive collection of known RNA secondary structures to date \cite{Andronescu08}. There are 2334 pseudoknot-free RNAs in the RNA STRAND database. We sorted them based on their length and selected the first 1720 ones, whose lengths vary between 4 and 123 nt. We excluded pseudoknotted structures because our current implementation is incapable of considering pseudoknots. Some sequences in the dataset allow only A-U base pairs (not a single C-G pair), in which case the Newton polytope degenerates into a line.

\subsection{Implementation}
We implemented the dynamic programming in (\ref{equ:dp}) using MATLAB convex hull function which is based on the quickhull algorithm \cite{Barber96}. As mentioned above, we used the $\mathcal{V}$-representation and computed the Minkowski sum by direct pairwise summation of vertices. More precisely, for two convex polytopes $P$ with vertices $p_1, \ldots, p_a$ and $Q$ with vertices $q_1, \ldots, q_b$, the vertices of $P \oplus Q$ are $p_i + q_j$ for $1 \leq i \leq a$ and $1 \leq j \leq b$. To verify the necessary condition, i.e. whether the experimentally determined feature vector lies on the boundary of the Newton polytope, we calculated the distance of the feature vector from the boundary of the polytope using the `p\_poly\_dist' MATLAB function \cite{p-poly-dist}. A zero distance corresponds to the case where the feature vector lies on the boundary, i.e. the condition is satisfied, and a positive distance to the case where the feature vector is in the interior of the Newton polytope. We normalized the distance by square root of the area of the polytope, which is a planar polygon in this case. The normalized distance quantifies how far the feature vector is from the boundary. We parallelized our MATLAB code using MATLAB `parfor'. We ran experiments on the JSLQ nodes of the Wayne State Grid in however non-parallel mode due to lack of support.  The length of input RNA sequences varied between 4 and about 120 nt. For the smallest ones, our program took a fraction of a second and for the longest ones it took less than 10 minutes to run on a 2.4GHz Dual Core AMD Opteron CPU with at most 16 GB RAM.


\section{Results}
For each strand of RNA, the distance between $c(x, y)$, the real feature vector of the secondary structure and the computed convex hull, $\mathcal{N}(x)$, is calculated using \cite{p-poly-dist}. We denote this distance by $r(x)$ here. The necessary condition for the learnability is satisfied if $r(x) = 0$ for all $x$ in the dataset, which shows that the observed feature vector lies on the boundary of $\mathcal{N}(x)$.

\begin{figure}
\begin{center}
\includegraphics[width=0.8\textwidth]{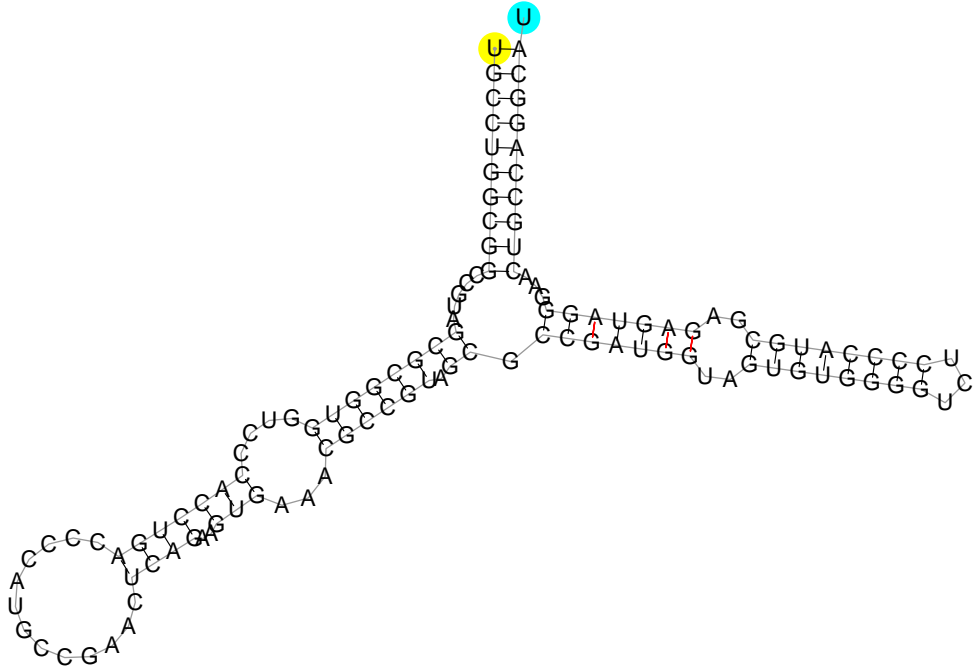} \\
(a) Secondary structure of h.5.b.E.coli.hlxnum. \\
\includegraphics[width=0.8\textwidth]{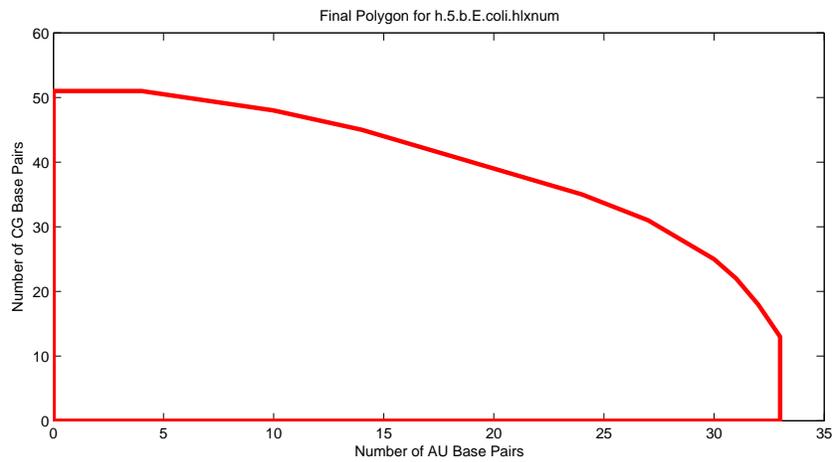}\\
(b) The Newton polygon of h.5.b.E.coli.hlxnum, with 12 vertices.
\end{center}
\caption{The secondary structure and Newton polygon of h.5.b.E.coli.hlxnum in RNA STRAND v2.0 \cite{Andronescu08}. The native feature vector lies on the boundary in this case.} \label{fig:case1}
\end{figure}

\begin{figure}
\begin{center}
\includegraphics[width=0.6\textwidth]{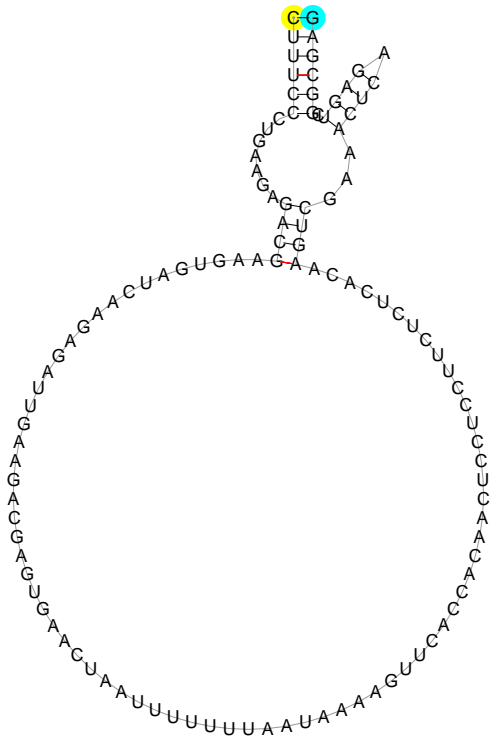} \\
(a) Secondary structure of Hammerhead Ribozyme (Type I). \\\ \\

\includegraphics[width=0.6\textwidth]{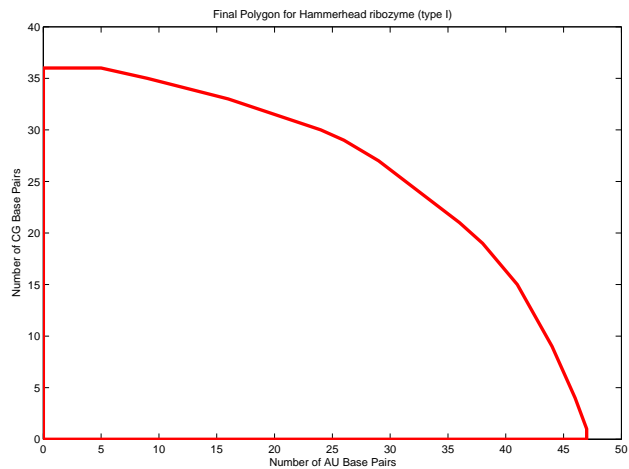} \\
(b) The Newton polygon of Hammerhead Ribozyme (Type I), with 15 vertices.
\end{center}
\caption{The secondary structure and Newton polygon of Hammerhead Ribozyme (Type I) in RNA STRAND v2.0 \cite{Andronescu08}. The native feature vector does not lie on the boundary in this case.} \label{fig:case2}
\end{figure}

\begin{figure}[h!]
\begin{center}
\includegraphics[width=0.8\textwidth]{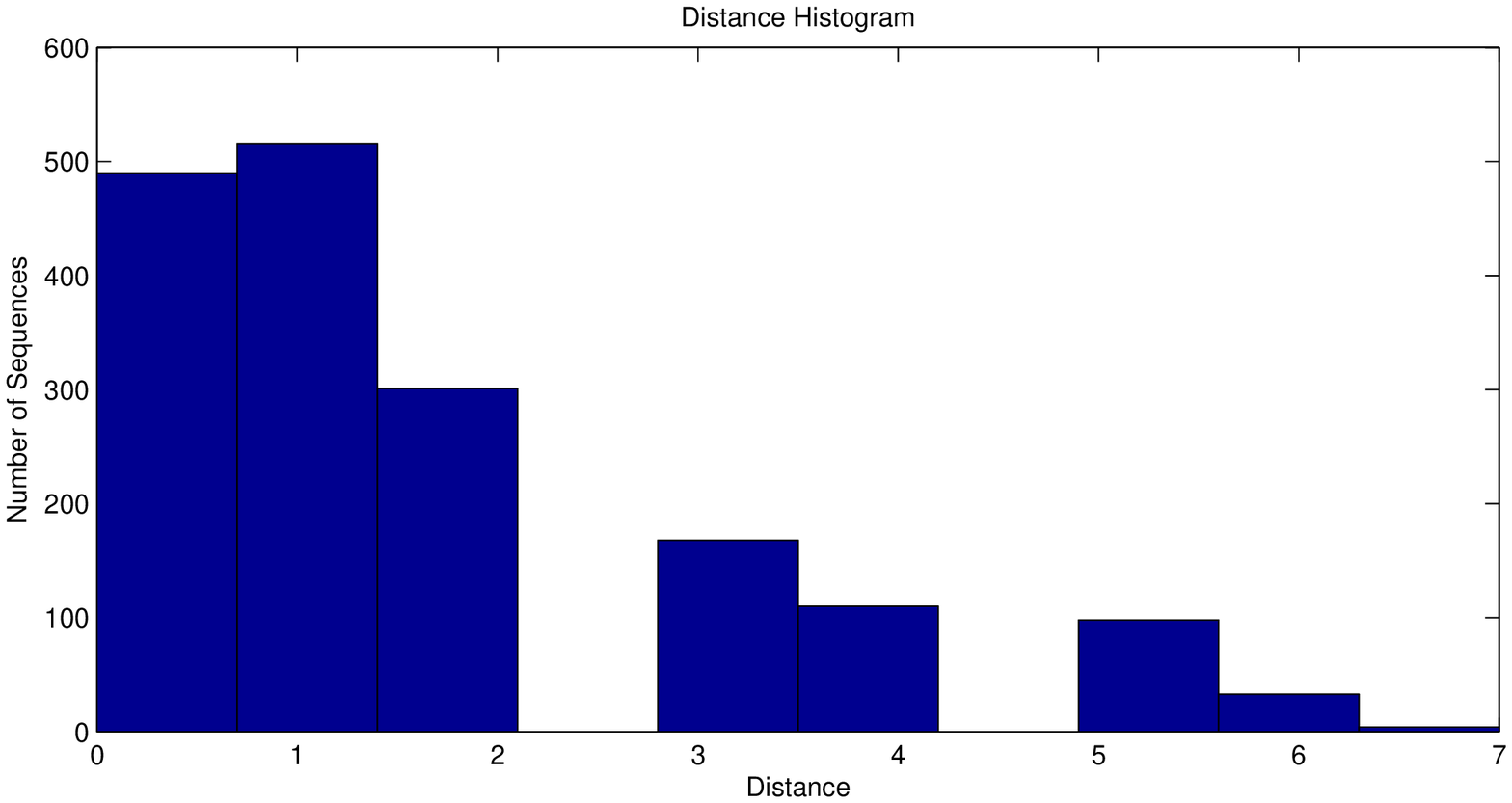}\\
\includegraphics[width=0.8\textwidth]{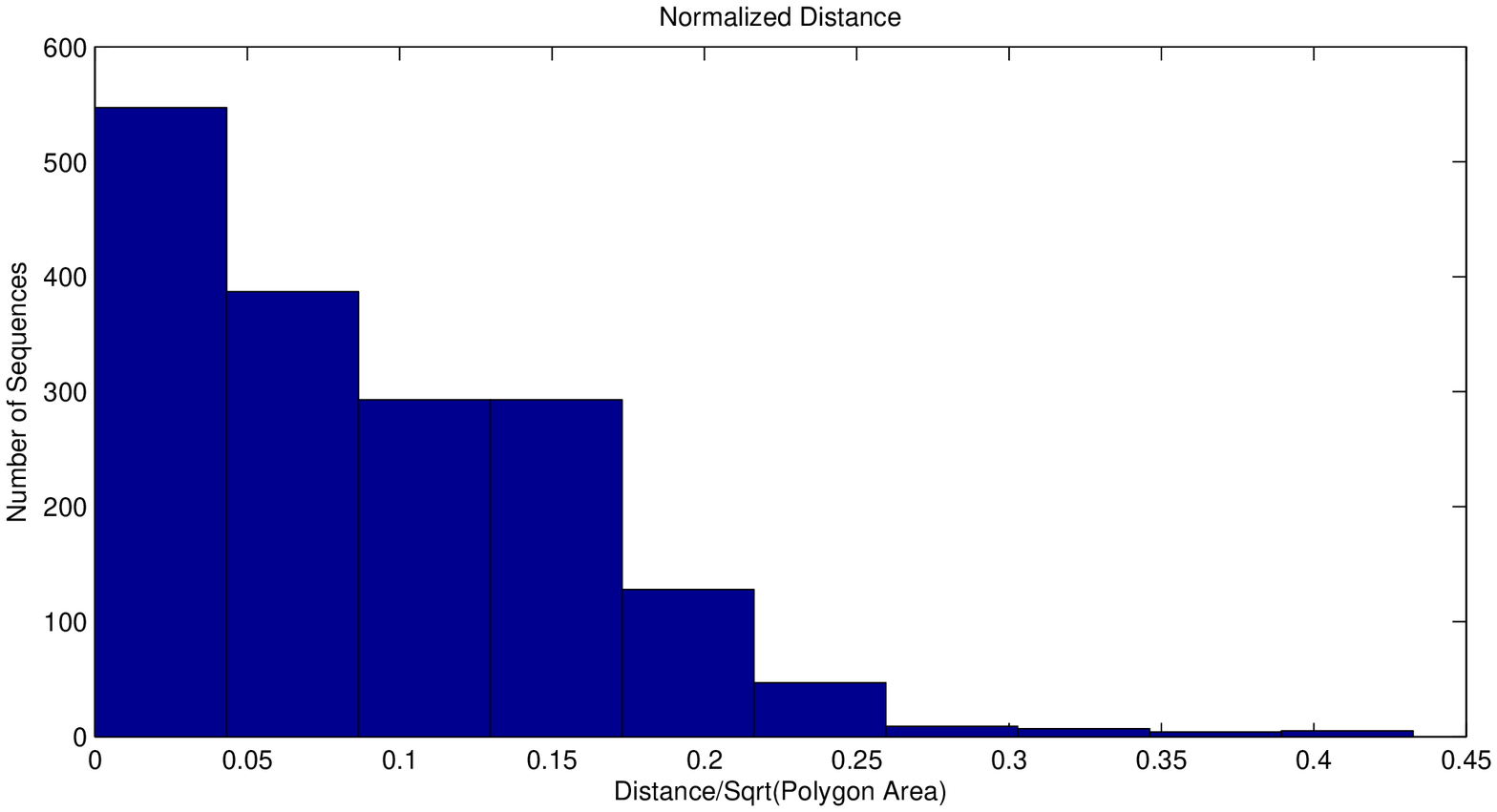}\\
\includegraphics[width=0.8\textwidth]{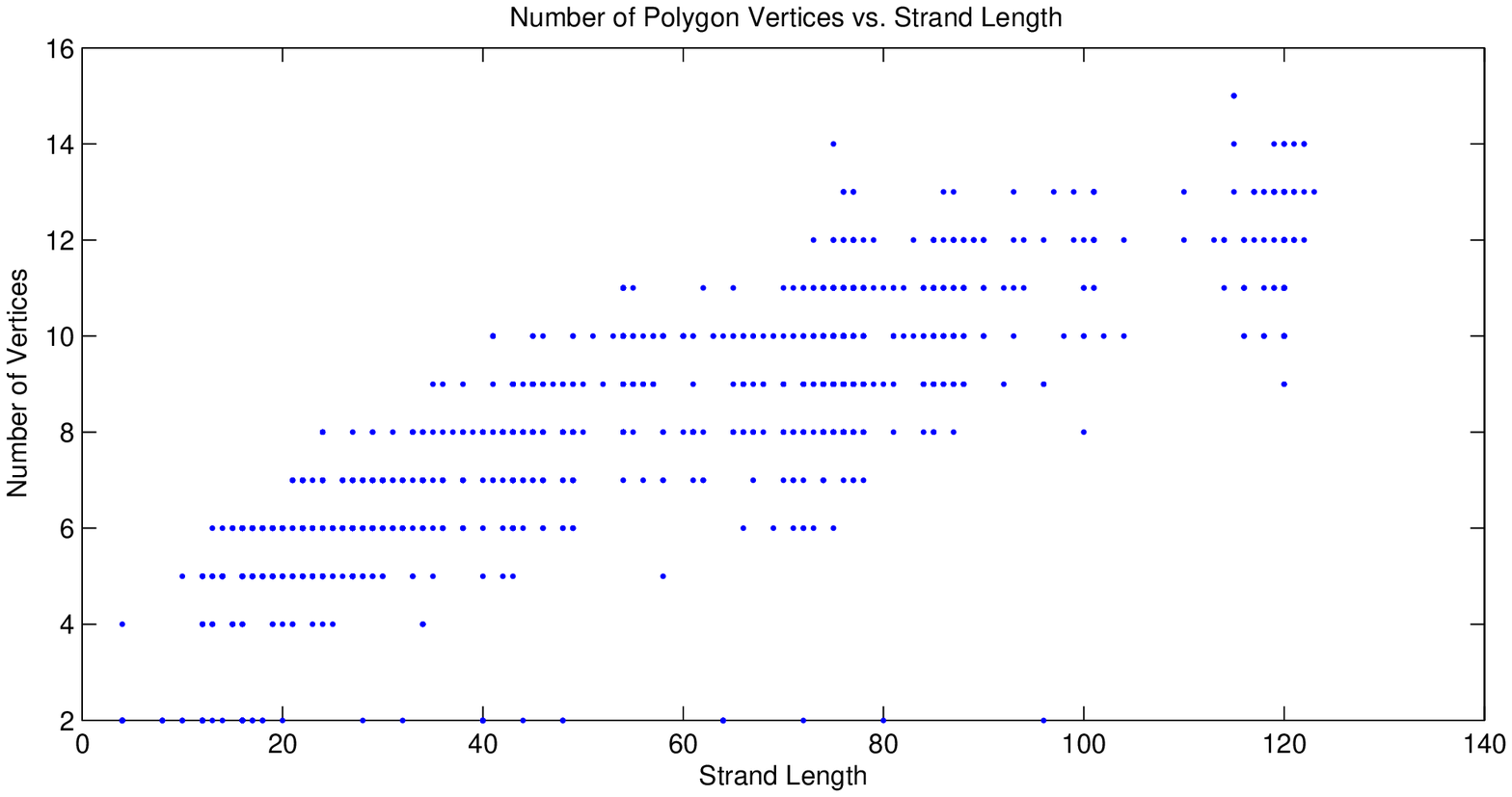}
\end{center}
\caption{(Top) Histogram of $r$. (Middle) Histogram of $r(x) / \sqrt{\mbox{area}(\mathcal{N}(x))}$. (Bottom) Scatter plot of the number of vertices of the Newton polygon as a function of sequence length. }\label{fig:Hists}
\end{figure}

Fig. \ref{fig:case1} illustrates the secondary structure of h.5.b.E.coli.hlxnum, an example in the dataset, and its Newton polygon. This RNA is 120 nt long, and its final Newton polygon has 12 vertices. The native feature vector for this RNA has $r = 0$ which means that it lies on the boundary of the polygon. Fig. \ref{fig:case2} illustrates another example, Hammerhead Ribozyme (Type I) whose length is 115 nt. Its Newton polygon has 15 vertices, and the calculated $r$ is 5 for this case. The big hairpin loop in this case suggests that the dominant energetic features are not base pairs, and this is consistent with the longer distance observed between $\mathcal{N}(x)$ and $c(x, y)$. 

Out of 1720 RNA sequences in our experiment, for 490 (28\%) the native feature vector is exactly on the boundary of the Newton polygon. For the remaining 1230 sequences, the native feature vector settles inside their Newton polygon. Fig. \ref{fig:Hists} demonstrates the histogram of $r(x)$ for the input dataset. For 516 (30\%) of input sequences, $r$ is non-zero but less than 1, which suggests that the dominant energetic features in those cases are canonical base pairs. For 579 (34\%) sequences, $r$ is between 1 and 5. In the remaining 135 cases (8\%), $r$ goes larger than 5. The second plot in Fig. \ref{fig:Hists} shows the normalized distance histogram. The square root of polygon area is used as the normalization factor. For 712 (41\%) strands, this number is no more than 0.05.

The relation between the number of vertices and the length of RNA strand is shown in the third plot of Fig. \ref{fig:Hists}. The maximum number of vertices is 15 which happens for three different strands, each 115 nt long. The minimum number of vertices is two; however, it is clear that no polygon exists with two vertices. In those cases, the RNA admits only A-U base pairs or only C-G base pairs and not both of them. The resulting polygon is just a line, and $c(x, y)$ always lies on the boundary of $\mathcal{N}(x)$. 

\section{Conclusion and Future Work}
We introduced the notion of learnability of the parameters of an energy model as a measure of its inherent capability. We derived a necessary condition for the learnability and gave a dynamic programming algorithm to assess it. Our algorithm computes the convex hull of the feature vectors of all feasible structures in the ensemble of a given input sequence. Also, that convex hull coincides with the \emph{Newton polytope} of the partition function as a polynomial in transformed energy parameters. 

Our theory applied to a simple energy model that counts A-U and C-G base pairs separately revealed that about one third of chosen known structures could potentially be predicted using this simple energy model. For another one third, the necessary condition is barely violated, which suggests that augmenting this energy model with more features is expected to solve the problem. The condition is severely violated for 8\% of sequences, which will be the subject of future investigation. The twilight zone is also interesting and requires deeper examination.   

The Newton polytope lies in the core of computer algebra for solving polynomial equations. Therefore, we envision applications of our RNA Newton polytope in symbolic estimation of energy parameters. Sufficient conditions for the learnability, and also assessing the generalization power of an energy model remain for future work.



\bibliographystyle{unsrt}
\bibliography{pub,ref,main}

\end{document}